\newtheorem{Thm}{Theorem}
\newtheorem{Lem}[Thm]{Lemma}
\newtheorem{Prop}[Thm]{Proposition}
\theoremstyle{definition}
\newcommand{\bra}[1]{{\left\langle #1 \right|}}
\newcommand{\ket}[1]{{\left| #1 \right\rangle}}
\begin{document}

\title{Any multipartite entangled state violating Mermin-Klyshko inequality
can be distilled
for almost all bipartite splits}

\author{Soojoon Lee}
\affiliation{
 Department of Mathematics and Research Institute for Basic Sciences,
 Kyung Hee University, Seoul 130-701, Korea
}
\author{Jinhyoung Lee}
\affiliation{
 Department of Physics,
 Hanyang University,
 Seoul 133-791, Korea
}
\author{Jaewan Kim}
\affiliation{
 School of Computational Sciences,
 Korea Institute for Advanced Study,
 Seoul 130-722, Korea
}
\date{\today}

\begin{abstract}
We study the explicit relation between violation of Bell inequalities and
bipartite distillability of multi-qubit states.
It has been shown that even though for $N\ge 8$
there exist $N$-qubit bound entangled states
which violates a Bell inequality~[
Phys. Rev. Lett. {\bf 87}, 230402 (2001)],
for all the states violating the inequality
there exists at least one splitting of the parties into two groups
such that pure-state entanglement can be distilled~[
Phys. Rev. Lett. {\bf 88}, 027901 (2002)].
We here prove that
for all $N$-qubit states violating the inequality
the number of distillable bipartite splits increases
exponentially with $N$,
and hence the probability that a randomly chosen bipartite split is distillable
approaches one exponentially with $N$, as $N$ tends to infinity.
We also show that there exists at least one $N$-qubit bound entangled state violating the inequality
if and only if $N\ge 6$.
\end{abstract}

\pacs{
03.67.Mn  
03.65.Ud, 
42.50.Dv 
}
\maketitle

Entanglement has been considered as a key ingredient for quantum information science,
and has brought a lot of its useful applications such as quantum key distribution and teleportation.
Nevertheless, there still exist open problems related to entanglement,
in particular, multipartite entanglement.

It is known that
entanglement can be divided into two kinds of entanglement.
One is called the {\em distillable} entanglement,
from which some pure entanglement can be extracted by
local quantum operations and classical communication,
and the other is called the {\em bound} entanglement,
which is not distillable.
Since only pure entanglement is directly useful for quantum information processing,
the bound entanglement seems to be useless.
However, it has been recently shown that
any bound entangled (BE) states are useful in quantum teleportation~\cite{HHH,Masanes1},
all multipartite pure entangled states are interconvertible
by stochastic local operations and classical communication
with the assistance of BE states~\cite{Ishizaka},
and there are several classes of BE states
with a positive key rate in quantum key distribution~\cite{HHHO1,HHHO2,HPHH,CCKKL,HA}.
Thus, it is necessary to analyze BE states more profoundly.

If one of the two most significant features related to entanglement is distillability,
then the other is {\em nonlocality},
which can be described as a physical property to explain
that quantum correlation is quite different from classical correlations.
Nonlocality can be seen from violation of some conditions, called Bell inequalities,
that are satisfied by any local variable theory,
and it is a well-known fact that any bipartite or multipartite pure state violates a Bell inequality
if and only if the state is entangled~\cite{Gisin91,PR92}.
However, for mixed states,
there does not exist such a simple relation between nonlocality and entanglement.
Since Werner~\cite{Werner} found the existence of entangled mixed states
described by a local hidden variable model,
it has been known that some of these states can violate Bell inequalities
after appropriately preprocessing the state~\cite{Popescu,Gisin96}.

There is a simple relation between nonlocality and distillability in fewer-qubit systems:
If any two-qubit~\cite{
HHH97} or three-qubit~\cite{LJK} (pure or mixed) state
violates a specific form of the Bell inequality
then it is distillable.
However, D\"{u}r~\cite{Dur} has shown that for $N\ge 8$
there exist $N$-qubit BE states
which violate a Bell inequality.
This result seems to show that nonlocality does not directly
imply distillability in multipartite cases,
even though it has been recently shown that
asymptotic violation of a Bell inequality is equivalent to distillability
in any multipartite quantum system~\cite{Masanes}.

But, Ac\'{\i}n~\cite{Acin} has demonstrated that
for all the states violating the inequality
there exists at least one splitting of the parties into two groups
such that pure-state entanglement can be distilled,
and has more analyzed
the relation of nonlocality to bipartite distillability
in his subsequent works~\cite{Acin_etal}.
This does not only imply that
there still exists a relation between nonlocality and
distillability for a certain bipartite split,
but also tells us that
it is possible to make
two-party quantum communications with respect to the bipartite split
secure against eavesdropping.
Then some questions naturally arise such as
which bipartite split is distillable
and how many splits are possible to be distillable if the Bell inequality is violated.

Assume that a multipartite entangled state violates the Bell inequality.
If it could be distilled for almost all bipartite splits,
then it would be possible
for almost all two-party quantum communications over the multipartite state
to be secure against eavesdropping,
regardless of how it is divided into two parties.
Thus, it would be important to answer the questions
in quantum communication theory as well as in entanglement theory.

In this paper, we show that
if any $N$-qubit state violates the inequality then
there exist much more than one distillable bipartite splits,
to be exact,
at least $\lfloor 2^{N-1} - 2^{(N-1)/2}+1\rfloor$ distillable bipartite splits.
Hence, the distillation probability that a randomly chosen bipartite split is distillable
approaches one exponentially with $N$ as $N$ tends to infinity.
This means that
if a given $N$-qubit state violates the Bell inequality for sufficiently large $N$
then almost all bipartite splits are distillable.
Furthermore, this result provides us with
the following necessary and sufficient condition for
the existence of $N$-qubit BE states violating the inequality:
At least one $N$-qubit BE state violates the inequality
if and only if $N\ge 6$.

Since it has been already known that
there exists a four-qubit BE state, the so-called Smolin state~\cite{Smolin},
violating some other Bell inequality~\cite{AH},
our condition does not seem to be very strong.
However, because our proof is based on the first main result
counting distillable bipartite splits,
this justifies some significance of considering the counting problem.

In order to introduce our main results,
we first consider the family of $N$-qubit states $\rho_N$ presented in~\cite{DCT,DC},
\begin{eqnarray}
\rho_N &=& \sum_{\sigma=\pm}\lambda_0^\sigma\ket{\Psi_0^\sigma}\bra{\Psi_0^\sigma}
\nonumber\\
&&+\sum_{j=1}^{2^{N-1}-1}\lambda_j
\left(\ket{\Psi_j^+}\bra{\Psi_j^+}+\ket{\Psi_j^-}\bra{\Psi_j^-}\right),
\label{eq:rho_N}
\end{eqnarray}
where
\begin{equation}
\ket{\Psi_j^\pm}=\frac{1}{\sqrt{2}}
\left(\ket{j}\ket{0}\pm\ket{2^{N-1}-j-1}\ket{1}\right),
\label{eq:Phi_j}
\end{equation}
and $\lambda_0^+ +\lambda_0^- +2\sum_j \lambda_j =1$.
We remark that any arbitrary $N$-qubit state can be depolarized to a state in this family,
and hence this family can be useful to find sufficient conditions
for nonseparability and distillability
in $N$-qubit systems~\cite{DCT}.
Thus, this family may be regarded
as a generalization of Werner states to multiqubit systems.

We prove our first main result
in the following way:
(i) We assume that any $N$-qubit state $\rho$ violates a specific form of Bell inequality.
(ii) By some appropriate depolarizing process,
the state $\rho$ can be transformed into $\rho_N$,
which also violates the same inequality.
(iii) We show that the state $\rho_N$ violating the inequality has
at least $\lfloor 2^{N-1} - 2^{(N-1)/2}+1\rfloor$ distillable bipartite splits.
(iv) We conclude that the state $\rho$ also has
at least $\lfloor 2^{N-1} - 2^{(N-1)/2}+1\rfloor$ distillable bipartite splits.
In order to prove the main result,
we need the following proposition and lemma.

For each ($N-1$)-bit string $j=j_1j_2\cdots j_{N-1}$,
let $P_j$ be the bipartite split such that
$j_i=0$ if and only if party $i$ belongs to the same set as the last party.
Then the following proposition about bipartite distillability of the states $\rho_N$
has been known by D\"{u}r and Cirac~\cite{DC}.
\begin{Prop}\label{Prop:DC}
$\rho_N$ is distillable for the bipartite split $P_j$
if and only if  $2\lambda_j<\Delta\equiv\lambda_0^+ -\lambda_0^-$.
\end{Prop}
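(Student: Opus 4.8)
The plan is to reduce the question entirely to the behaviour of $\rho_N$ under partial transposition across $P_j$ and then invoke the two-qubit distillability criterion. First I would rewrite $\rho_N$ in the computational basis. Writing $\bar\jmath$ for the bitwise complement of the $(N-1)$-bit string $j$, so that $\ket{2^{N-1}-j-1}=\ket{\bar\jmath}$, the pair $\{\ket{j}\ket{0},\ket{\bar\jmath}\ket{1}\}$ is a complementary pair $\{\ket{x},\ket{\bar x}\}$ of $N$-bit strings with $x=j0$. These $2^{N-1}$ pairs are mutually orthogonal and exhaust the basis, so $\rho_N$ is block diagonal over them. For $j\neq 0$ one has $\lambda_j(\ket{\Psi_j^+}\bra{\Psi_j^+}+\ket{\Psi_j^-}\bra{\Psi_j^-})=\lambda_j(\ket{x}\bra{x}+\ket{\bar x}\bra{\bar x})$, i.e.\ simply $\lambda_j$ times the identity on that block, whereas the $j=0$ (GHZ) block alone carries an off-diagonal coherence, namely $\tfrac{\Delta}{2}(\ket{0\cdots0}\bra{1\cdots1}+\mathrm{h.c.})$ on top of the diagonal $\tfrac{1}{2}(\lambda_0^++\lambda_0^-)$.

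Next I would track this single coherence under the partial transpose $T_B$ associated with $P_j$, where $B$ denotes the group not containing the last party. Diagonal entries are invariant, so the only computation is the image of $\ket{0\cdots0}\bra{1\cdots1}$. By the definition of $P_j$, party $i$ lies in $B$ exactly when $j_i=1$; transposing the $B$-indices therefore sends $\ket{0\cdots0}\bra{1\cdots1}$ to $\ket{j0}\bra{\bar\jmath 1}$, which lands precisely on the off-diagonal of the block indexed by $j$. Hence $\rho_N^{T_B}$ is again block diagonal, with every block unchanged except the $j$-block, which now has diagonal entries $\lambda_j$ and off-diagonal entries $\Delta/2$, and thus eigenvalues $\lambda_j\pm\Delta/2$. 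Consequently $\rho_N^{T_B}\ge 0$ if and only if $2\lambda_j\ge\Delta$.

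The \emph{only if} direction is then immediate: when $2\lambda_j\ge\Delta$ the state is PPT across $P_j$, and PPT states are never distillable. For the \emph{if} direction I would exhibit a local reduction to a two-qubit state that is NPT. Projecting group $A$ onto $\mathrm{span}(\ket{0\cdots0}_A,\ket{1\cdots1}_A)$ and group $B$ onto $\mathrm{span}(\ket{0\cdots0}_B,\ket{1\cdots1}_B)$ are operations local to the two sides of $P_j$, and the surviving support consists of exactly the four strings $0\cdots0$, $1\cdots1$, $j0$, $\bar\jmath 1$, that is, the $j=0$ and $j$ blocks. Relabelling these as the two-qubit basis $\ket{00},\ket{11},\ket{01},\ket{10}$, the reduced state inherits the same coherence structure, so its partial transpose carries the eigenvalue $\lambda_j-\Delta/2<0$. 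Since a two-qubit state is distillable precisely when it is NPT, and since distillability is preserved under these local projections, $\rho_N$ is distillable across $P_j$.

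The step I expect to be the real crux is the coherence-migration computation: one must check carefully that, for the particular bipartition dictated by $P_j$, transposing the $B$-indices carries the all-zeros/all-ones coherence of the GHZ block exactly onto the off-diagonal of the $j$-block and nowhere else. Once that combinatorial identification of strings is pinned down, the remaining ingredients --- invariance of the diagonal, the two-qubit NPT criterion, and the locality of the projections --- are routine.
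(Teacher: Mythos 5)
Your proposal is correct and is essentially the proof of D\"ur and Cirac in Ref.~\cite{DC}, which the paper cites for this proposition without reproducing the argument: block-diagonalize $\rho_N$ over the complementary pairs $\{\ket{j0},\ket{\bar\jmath 1}\}$, note that partial transposition across $P_j$ carries the GHZ coherence $\Delta/2$ exactly onto the $j$-block so that $\rho_N^{T_B}\ge 0$ iff $2\lambda_j\ge\Delta$, and in the NPT case locally project each group onto the span of its all-zeros and all-ones states to obtain an NPT, hence distillable, two-qubit state. The only cosmetic point is the implicit convention $\lambda_0^+\ge\lambda_0^-$ (so that $\Delta\ge 0$; otherwise replace $\Delta$ by $|\Delta|$, achievable by a local $\sigma_z$ on one qubit), which matches the usage in \cite{DC} and in the paper.
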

We note that the quantity $\Delta$ in Proposition~\ref{Prop:DC}
plays an important role in not only bipartite distillability
but also a certain form of Bell inequality,
which we will crucially use in this paper.

From Proposition~\ref{Prop:DC},
we can obtain the following key lemma for our first main result.
\begin{Lem}\label{Lem:main}
If
\begin{equation}
\Delta > \frac{1}{2^{(N-1)/2}}
\label{eq:violation}
\end{equation}
then there exist
at least $\lfloor 2^{N-1} - 2^{(N-1)/2}+1\rfloor$ distillable bipartite splits
in $\rho_N$.
\end{Lem}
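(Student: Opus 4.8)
The plan is to count the distillable splits directly via Proposition~\ref{Prop:DC} and to control the \emph{complementary} set of non-distillable splits. Since the nontrivial bipartite splits $P_j$ are indexed by the $2^{N-1}-1$ nonzero $(N-1)$-bit strings $j$, and Proposition~\ref{Prop:DC} tells us that $P_j$ is distillable exactly when $\lambda_j<\Delta/2$, it suffices to show that the number $S\equiv\#\{j:\lambda_j\ge\Delta/2\}$ of non-distillable splits is small. More precisely, writing $M=2^{N-1}-1$ for the total number of nontrivial splits, I would establish $S\le\lceil 2^{(N-1)/2}\rceil-2$, since then the number of distillable splits $M-S$ is at least $2^{N-1}+1-\lceil 2^{(N-1)/2}\rceil=\lfloor 2^{N-1}-2^{(N-1)/2}+1\rfloor$, which is exactly the claimed bound.

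The heart of the argument is a single averaging inequality tying $S$ to the normalization of $\rho_N$. Each non-distillable index contributes at least $\Delta/2$ to the weight $\sum_j\lambda_j$, so that $\sum_{j=1}^{2^{N-1}-1}\lambda_j\ge S\,\Delta/2$ by nonnegativity of the remaining eigenvalues. On the other hand, the normalization $\lambda_0^++\lambda_0^-+2\sum_j\lambda_j=1$ together with $\lambda_0^-\ge0$ gives $\lambda_0^++\lambda_0^-\ge\lambda_0^+-\lambda_0^-=\Delta$, hence $\sum_j\lambda_j\le(1-\Delta)/2$. Combining the two bounds yields $S\Delta\le 1-\Delta$, i.e.\ $S\le 1/\Delta-1$, and the hypothesis $\Delta>2^{-(N-1)/2}$ immediately gives the strict bound $S<2^{(N-1)/2}-1$.

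It remains to pass from this strict real inequality to the integer bound $S\le\lceil 2^{(N-1)/2}\rceil-2$, and this floor/ceiling bookkeeping is the only delicate point. When $N$ is odd, $2^{(N-1)/2}$ is an integer and $S<2^{(N-1)/2}-1$ forces $S\le 2^{(N-1)/2}-2=\lceil 2^{(N-1)/2}\rceil-2$; when $N$ is even, $2^{(N-1)/2}$ is irrational and the largest integer strictly below $2^{(N-1)/2}-1$ is again $\lceil 2^{(N-1)/2}\rceil-2$. Either way the desired count follows. I expect the main obstacle to be purely cosmetic---keeping the floor and ceiling identities consistent and confirming that the index set genuinely has $2^{N-1}-1$ nontrivial splits (the all-zero string $j=0$ being the trivial ``everyone together'' split)---rather than any analytic difficulty, since the core estimate is a one-line consequence of Proposition~\ref{Prop:DC} and the trace condition.
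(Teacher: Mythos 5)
Your proposal is correct and is essentially the paper's own argument in contrapositive form: the paper supposes the number of distillable splits satisfies $m\le 2^{N-1}-2^{(N-1)/2}$ and derives a contradiction from the normalization $\lambda_0^++\lambda_0^-+2\sum_j\lambda_j=1$, the bound $\lambda_0^++\lambda_0^-\ge\Delta$, and the threshold $2\lambda_j\ge\Delta$ for non-distillable splits from Proposition~\ref{Prop:DC} --- exactly the ingredients of your direct bound $S\Delta\le 1-\Delta$. Your only departure is stylistic, and in fact your explicit floor/ceiling bookkeeping for odd versus even $N$ spells out an integrality step the paper leaves implicit in passing from $m>2^{N-1}-2^{(N-1)/2}$ to the stated $\lfloor 2^{N-1}-2^{(N-1)/2}+1\rfloor$.
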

\begin{proof}
Let $m$ be the number of distillable bipartite splits,
$P_{j_1}, P_{j_2}, \ldots , P_{j_m}$.
Suppose that $m \le 2^{N-1} - 2^{(N-1)/2}$.
Then we readily obtain the following inequality: 
\begin{eqnarray}
1-\Delta 
&\ge& 
2\sum_{j=1}^{2^{N-1}-1} \lambda_j \nonumber \\
&=& 2(\lambda_{j_1}+\lambda_{j_2}+\cdots+ \lambda_{j_m})
+2\sum_{j\notin \{j_1,\ldots,j_m\}} \lambda_j \nonumber \\
&\ge& 2(\lambda_{j_1}+\lambda_{j_2}+\cdots+ \lambda_{j_m}) 
+(2^{N-1}-1-m)\Delta.\nonumber \\
\label{eq:ineqs01}
\end{eqnarray}
It follows that
\begin{eqnarray}
1&\ge&  2(\lambda_{j_1}+\lambda_{j_2}+\cdots+ \lambda_{j_m})
+(2^{N-1}-m)\Delta\nonumber \\
&>&  2(\lambda_{j_1}+\lambda_{j_2}+\cdots+ \lambda_{j_m})
+(2^{N-1}-m)/2^{(N-1)/2} \nonumber \\
&\ge& 2(\lambda_{j_1}+\lambda_{j_2}+\cdots+ \lambda_{j_m}) +1.
\label{eq:ineqs02}
\end{eqnarray}
The inequality~(\ref{eq:ineqs02})
leads to a contradiction. 
Therefore, we can conclude that $m> 2^{N-1} - 2^{(N-1)/2}$.
\end{proof}

The Bell inequality that D\"{u}r and Ac\'{\i}n have considered
is called the Mermin-Klyshko (MK) inequality~\cite{Mermin,BK},
which generalizes the Clauser-Horne-Shimony-Holt inequality~\cite{CHSH} into $N$-qubit cases.
Let $\mathcal{B}_N$ be the Bell operator defined recursively as
\begin{equation}
\mathcal{B}_i=\frac{1}{2}\left[
\mathcal{B}_{i-1}\otimes\left(\sigma_{\hat{n}_i}+\sigma_{\hat{n}'_i}\right)
+\mathcal{B}'_{i-1}\otimes\left(\sigma_{\hat{n}_i}-\sigma_{\hat{n}'_i}\right)
\right],
\label{eq:Bell_operator}
\end{equation}
where $\sigma_{\hat{n}_i}=\hat{n}_i\cdot \sigma$ and $\sigma_{\hat{n}'_i}=\hat{n}'_i\cdot \sigma$
are the two dichotomic observables measured on each particle $i$,
$\mathcal{B}'_i$ is obtained from $\mathcal{B}_i$ by exchanging all the $\hat{n}_i$ and $\hat{n}'_i$,
and $\mathcal{B}_1=\sigma_{\hat{n}_1}$.
Then the MK inequality is as follows:
\begin{equation}
\left|\mathrm{tr}\left(\mathcal{B}_N\rho\right)\right|\le 1.
\label{eq:MK_ineq0}
\end{equation}

Choosing the same measurement directions in all $N$ locations,
$\sigma_{\hat{n}_i}=\sigma_x$ and $\sigma_{\hat{n}'_i}=\sigma_y$ for all $i$,
after local phase redefinition~\cite{Acin},
$\mathcal{B}_N$ can be written as
\begin{equation}
\mathcal{B}_{N}=2^{(N-1)/2}
\left(\ket{\Psi_0^+}\bra{\Psi_0^+}-\ket{\Psi_0^-}\bra{\Psi_0^-}\right).
\label{eq:B_N}
\end{equation}
We note that, by the depolarizing process in~\cite{DCT},
any $N$-qubit state $\rho$ can be transformed into
one in the family of $\rho_N$
with $\lambda_0^\pm =\bra{\Psi_0^\pm}\rho_N\ket{\Psi_0^\pm}=\bra{\Psi_0^\pm}\rho\ket{\Psi_0^\pm}$
and $2\lambda_j= \bra{\Psi_j^+}\rho_N\ket{\Psi_j^+}+\bra{\Psi_j^-}\rho_N\ket{\Psi_j^-}
=\bra{\Psi_j^+}\rho\ket{\Psi_j^+}+\bra{\Psi_j^-}\rho\ket{\Psi_j^-}$.
Thus,  for the Bell operator $\mathcal{B}_{N}$ in Eq.~(\ref{eq:B_N}),
we obtain the following equalities:
\begin{eqnarray}
2^{-(N-1)/2}\mathrm{tr}\left(\mathcal{B}_{N}\rho\right)
&=&
\bra{\Psi_0^+}\rho\ket{\Psi_0^+}-\bra{\Psi_0^-}\rho\ket{\Psi_0^-}
\nonumber\\
&=&
\bra{\Psi_0^+}\rho_{N}\ket{\Psi_0^+}-\bra{\Psi_0^-}\rho_{N}\ket{\Psi_0^-}
\nonumber\\
&=&
\lambda_0^+ - \lambda_0^- =\Delta,
\label{eq:lambda_Delta}
\end{eqnarray}
and hence we have the following theorem by Lemma~\ref{Lem:main}.
\begin{Thm}\label{Thm:main}
For all the $N$-qubit states $\rho$ violating the MK inequality
with respect to the Bell operator~(\ref{eq:B_N}),
there exist
at least $\lfloor 2^{N-1} - 2^{(N-1)/2}+1\rfloor$ distillable bipartite splits.
\end{Thm}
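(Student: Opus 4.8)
The plan is to reduce Theorem~\ref{Thm:main} to Lemma~\ref{Lem:main} by converting the hypothesis of MK violation into the quantitative bound~(\ref{eq:violation}) on $\Delta$, and then to carry the count of distillable splits from the depolarized state $\rho_N$ back to the original state $\rho$. First I would unpack the assumption: by~(\ref{eq:MK_ineq0}), $\rho$ violating the MK inequality with respect to $\mathcal{B}_N$ in~(\ref{eq:B_N}) means $|\mathrm{tr}(\mathcal{B}_N\rho)|>1$. Using the chain of equalities~(\ref{eq:lambda_Delta}), which identifies $2^{-(N-1)/2}\mathrm{tr}(\mathcal{B}_N\rho)$ with $\Delta=\lambda_0^+-\lambda_0^-$, I obtain $\mathrm{tr}(\mathcal{B}_N\rho)=2^{(N-1)/2}\Delta$, so violation is exactly $2^{(N-1)/2}|\Delta|>1$, that is, $|\Delta|>2^{-(N-1)/2}$. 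This is the hypothesis of Lemma~\ref{Lem:main}, up to an absolute value.

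Next I would dispose of the sign ambiguity, since Lemma~\ref{Lem:main} requires $\Delta>2^{-(N-1)/2}$ whereas violation only controls $|\Delta|$. If $\Delta<0$ I would apply a local $\sigma_z$ on the last qubit; because the last qubit carries $\ket{0}$ in the first term and $\ket{1}$ in the second term of~(\ref{eq:Phi_j}), this sends $\ket{\Psi_0^\pm}\mapsto\ket{\Psi_0^\mp}$, thereby interchanging $\lambda_0^+$ and $\lambda_0^-$ and replacing $\Delta$ by $-\Delta$, while each $\lambda_j$ is left fixed (the $j$-block in~(\ref{eq:rho_N}) is the symmetric sum of the two projectors). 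As a local unitary is LOCC across every bipartite split, it changes neither membership in the family~(\ref{eq:rho_N}) nor the set of distillable splits, so without loss of generality $\Delta>2^{-(N-1)/2}$.

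I would then invoke the depolarizing process recalled before~(\ref{eq:lambda_Delta}): $\rho$ is mapped to a state $\rho_N$ of the family~(\ref{eq:rho_N}) carrying the same $\Delta$ and the matching diagonal data. Applying Lemma~\ref{Lem:main} to $\rho_N$ produces at least $\lfloor 2^{N-1}-2^{(N-1)/2}+1\rfloor$ splits $P_j$ that are distillable for $\rho_N$. Finally I would transfer this count to $\rho$: the depolarization of~\cite{DCT} is assembled from operations acting on the individual qubits, so for each fixed split $P_j$ it is LOCC across that split; hence whenever $\rho_N$ is distillable for $P_j$, so is $\rho$, as one may first depolarize and then distill. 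Thus $\rho$ inherits at least $\lfloor 2^{N-1}-2^{(N-1)/2}+1\rfloor$ distillable splits, which is the claim.

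I expect the main obstacle to be the last, conceptual step rather than any computation: one must be sure the implication runs from $\rho_N$ to $\rho$ and not the reverse. Distillability transfers only because the depolarizing map is local, hence LOCC, \emph{for every} split $P_j$; so the essential check is that the map of~\cite{DCT} acts qubit-wise and never mixes the two parties of any $P_j$. By comparison the sign normalization of $\Delta$ is routine once one observes that $\sigma_z$ on the last qubit exactly swaps $\ket{\Psi_0^+}$ and $\ket{\Psi_0^-}$, and the passage from $|\Delta|>2^{-(N-1)/2}$ to~(\ref{eq:violation}) is immediate.
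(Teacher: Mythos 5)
Your proposal is correct and follows essentially the same route as the paper's: it converts MK violation into the bound $\Delta>2^{-(N-1)/2}$ via the identity~(\ref{eq:lambda_Delta}), applies Lemma~\ref{Lem:main} to the depolarized state $\rho_N$, and carries the count of distillable splits back to $\rho$ because the depolarization acts qubit-wise and is therefore LOCC across every bipartite split (the paper's steps (i)--(iv)). You even make explicit two points the paper leaves implicit---the sign normalization of $\Delta$ by a local $\sigma_z$ on the last qubit, which swaps $\ket{\Psi_0^+}$ and $\ket{\Psi_0^-}$ while fixing each $\lambda_j$, and the correct direction of the LOCC transfer from $\rho_N$ to $\rho$---and both are handled properly.
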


Let $P(N)$ be the probability that
a randomly chosen bipartite split on an $N$-qubit state is distillable,
when it violates the MK inequality
with respect to the Bell operator~(\ref{eq:B_N}).
Then it follows from Theorem~\ref{Thm:main} that
\begin{eqnarray}
P(N)
&\ge& \frac{2^{N-1} - 2^{(N-1)/2}}{2^{N-1}-1} \nonumber \\
&=& 1- \frac{1}{2^{(N-1)/2}+1}.
\label{eq:P_N}
\end{eqnarray}
This implies that
$P(N)$ approaches one exponentially with $N$ as $N$ tends to infinity
as seen in FIG.~\ref{Fig:Pr_N}.

\begin{figure}
\includegraphics[width=\linewidth]{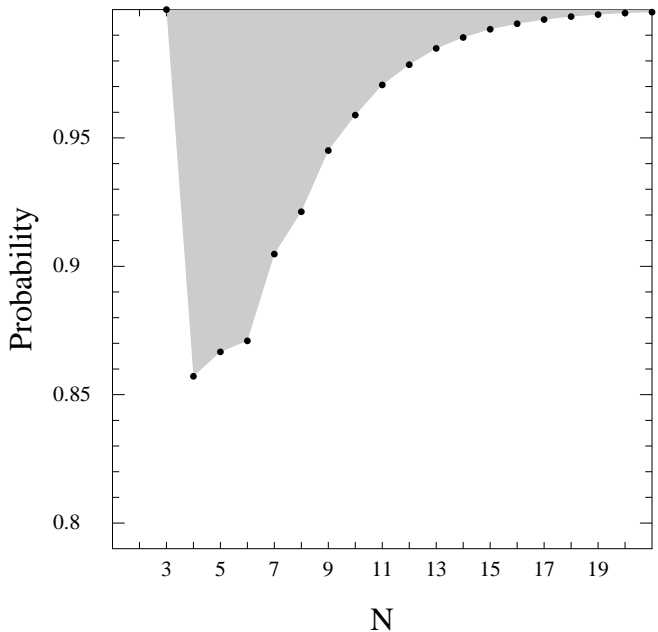}
\caption{\label{Fig:Pr_N}
The distillation probability $P(N)$ that
a randomly chosen bipartite split on an $N$-qubit state is distillable,
when it violates the MK inequality
with respect to the Bell operator~(\ref{eq:B_N}).
}
\end{figure}

Interestingly, Theorem~\ref{Thm:main} provides us with
a necessary and sufficient condition for
the existence of $N$-qubit BE states violating the MK inequality
with respect to the Bell operator~(\ref{eq:B_N}).
In order to show the condition,
we begin with reminding the following proposition
about a relation between distillability and negative partial transposition (NPT),
which has been shown by D\"{u}r and Cirac~\cite{DCT}.
\begin{Prop}\label{Prop:DCT}
A maximally entangled pair between particles $i$ and $j$
can be distilled from $\rho_N$ if and only if
all possible bipartite splits of $\rho_N$
where the particles $i$ and $j$ belong to different parties,
have NPT.
\end{Prop}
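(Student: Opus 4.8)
The plan is to prove the two implications separately, putting essentially all of the work into the ``if'' direction and exploiting the GHZ-diagonal structure of $\rho_N$. The feature I would rely on throughout is that each basis vector $\ket{\Psi_j^\pm}$ is a stabilizer (GHZ-type) state: indeed $\ket{\Psi_0^\pm}=\frac{1}{\sqrt{2}}\left(\ket{0}^{\otimes N}\pm\ket{1}^{\otimes N}\right)$, and the remaining $\ket{\Psi_j^\pm}$ differ from these only by local bit flips, so $\rho_N$ is diagonal in a complete GHZ basis of the $N$-qubit space.

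For the ``only if'' direction I would argue by contraposition. Suppose some bipartite split $A|B$ with $i\in A$ and $j\in B$ is PPT. Any maximally entangled pair shared between particles $i$ and $j$ is entangled across $A|B$, so distilling it from $\rho_N$ by global LOCC would in particular distill entanglement across $A|B$, reading the protocol as LOCC with respect to the coarser bipartition $A|B$. This is impossible for a PPT state, since PPT states are undistillable. Hence, if the pair can be distilled, every split separating $i$ and $j$ must be NPT.

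The substantive ``if'' direction runs the other way. Assuming every split separating $i$ and $j$ is NPT, I would give an explicit reduction to a two-qubit problem: let all $N-2$ parties other than $i$ and $j$ measure their qubits in the eigenbasis of $\sigma_x$ and broadcast the outcomes. Because the $\ket{\Psi_j^\pm}$ are stabilizer states, each such measurement sends a GHZ-diagonal state on $n$ qubits to a GHZ-diagonal state on $n-1$ qubits, with the diagonal weights re-summed and with outcome-dependent signs that can be undone by local Pauli corrections. Iterating over all $N-2$ parties leaves a Bell-diagonal two-qubit state $\rho_{ij}$ whose Bell weights are explicit partial sums of the $\lambda_j$ and of $\lambda_0^\pm$. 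For a two-qubit state NPT is equivalent to distillability, so it suffices to show that $\rho_{ij}$ is NPT.

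The hard part will be this last matching step: verifying that the conjunction of the multipartite NPT conditions, taken over all splits separating $i$ and $j$, forces the largest Bell weight of $\rho_{ij}$ to exceed $1/2$. Here I would invoke Proposition~\ref{Prop:DC}, which identifies NPT (equivalently, distillability) across the split $P_j$ with the inequality $2\lambda_j<\Delta$, and then track how those $\lambda$-inequalities aggregate into the Peres criterion for $\rho_{ij}$. I expect the delicate points to be the bookkeeping of the measurement-outcome signs, the identification of exactly which partial sums of the $\lambda$'s appear as Bell weights, and the verification that the protocol succeeds with nonzero probability so that a genuine EPR pair between $i$ and $j$ is extracted.
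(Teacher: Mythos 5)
First, note that the paper itself contains no proof of this proposition: it is imported verbatim from D\"{u}r and Cirac~\cite{DCT} (see also~\cite{DC}), so your attempt has to be measured against their argument. Your ``only if'' direction is correct and is exactly the standard one (LOCC among the $N$ parties is a special case of LOCC across the coarser cut $A|B$, PPT across $A|B$ is preserved, and an EPR pair between $i$ and $j$ is NPT across any such cut). Your structural bookkeeping for the ``if'' direction is also correct as far as it goes: a $\sigma_x$ measurement on one qubit of $\ket{\Psi_k^{\pm}}$ yields $\ket{\Psi_{k'}^{\pm}}$ or $\ket{\Psi_{k'}^{\mp}}$ on one fewer qubit, the sign is correctable by a local $\sigma_z$, and after all $N-2$ outsiders measure, parties $i$ and $j$ hold a Bell-diagonal state with weights $\mu_0^{\pm}=\lambda_0^{\pm}+\sum_{k\notin S,\,k\neq 0}\lambda_k$ and $\mu_1=\sum_{k\in S}\lambda_k$, where $S$ is the set of splits $P_k$ separating $i$ from $j$. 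This state is NPT if and only if $\Delta>2\sum_{k\in S}\lambda_k$.

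And that is precisely where your plan breaks: the hypothesis only gives the \emph{individual} inequalities $2\lambda_k<\Delta$ for each of the $|S|=2^{N-2}$ separating splits, and these do not aggregate into the \emph{summed} inequality $2\sum_{k\in S}\lambda_k<\Delta$ that your one-shot protocol needs. A concrete counterexample with $N=3$, particles $1$ and $3$: take $\lambda_0^+=0.3$, $\lambda_0^-=0$, $\lambda_{10}=\lambda_{11}=0.14$, $\lambda_{01}=0.07$. Every split satisfies $2\lambda_k\le 0.28<0.3=\Delta$, so all splits are NPT and the proposition guarantees a distillable pair; yet your reduced state $\rho_{13}$ has largest Bell weight $\mu_0^+=0.3+0.07=0.37<1/2$ and is separable, since $2(\lambda_{10}+\lambda_{11})=0.56>\Delta$. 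No sign bookkeeping or success-probability analysis can rescue this; the measurement-first protocol is simply not strong enough. The missing idea in D\"{u}r and Cirac's proof is a \emph{recursive multi-copy purification performed before the final measurements}: acting on two copies of $\rho_N$ with local CNOTs at every site, measuring the second copy, postselecting on consistent outcomes, and depolarizing back into the family maps $\rho_N$ to another member of the family in which (up to a common normalization) the coefficients are squared, so each ratio $2\lambda_k/\Delta<1$ strictly contracts under iteration. After finitely many rounds one reaches $2\sum_{k\in S}\lambda_k<\Delta$, and only then does your $\sigma_x$ reduction produce an NPT, hence distillable, two-qubit state. Without this iterative step inside the family, the ``if'' direction does not close.
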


By Theorem~\ref{Thm:main} and Proposition~\ref{Prop:DCT},
we can prove the following theorem.
\begin{Thm}\label{Thm:main2}
There exists at least one $N$-qubit BE state violating the MK inequality
with respect to the Bell operator~(\ref{eq:B_N})
if and only if $N\ge 6$.
\end{Thm}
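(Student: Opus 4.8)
The plan is to recast both the distillability and the Bell-violation conditions as statements about the weights $\lambda_j$ of the depolarized state $\rho_N$, and then to reduce the problem to a single separating-system count. First I would record the characterizations I intend to combine. By Proposition~\ref{Prop:DC} the split $P_j$ is distillable (equivalently NPT) exactly when $2\lambda_j<\Delta$, so the non-distillable (PPT) splits are precisely those with $2\lambda_j\ge\Delta$. By Proposition~\ref{Prop:DCT} a maximally entangled pair between two particles can be distilled iff every split separating them is NPT; hence $\rho_N$ is \emph{not} distillable iff every pair of the $N$ parties is separated by at least one PPT split, i.e.\ iff the family of PPT splits separates all pairs. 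Finally, by Eq.~(\ref{eq:lambda_Delta}) the state violates the MK inequality iff $\Delta>2^{-(N-1)/2}$ (after relabelling $\pm$ so that $\Delta>0$), the hypothesis already used in Lemma~\ref{Lem:main}.

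The combinatorial input I would use is that a family of nontrivial bipartitions of $N$ elements separates every pair iff the $\{0,1\}$-labels it assigns to the parties are all distinct, so the minimum size of such a separating family is $\lceil\log_2 N\rceil$. Combining this with the PPT condition $2\lambda_j\ge\Delta$ and the normalization $\lambda_0^++\lambda_0^-+2\sum_j\lambda_j=1$, together with $\lambda_0^++\lambda_0^-\ge\lambda_0^+\ge\Delta$, I would show that any bound entangled $\rho_N$ must satisfy
\begin{equation}
1\ \ge\ (\lambda_0^++\lambda_0^-)+2\!\!\sum_{j\ \mathrm{PPT}}\!\!\lambda_j\ \ge\ \bigl(1+\lceil\log_2 N\rceil\bigr)\Delta,
\nonumber
\end{equation}
so that $\Delta\le 1/(1+\lceil\log_2 N\rceil)$.

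The two directions then follow by comparing this bound with the violation threshold $2^{-(N-1)/2}$. For the ``only if'' part I would argue contrapositively: if $N\le5$ then $1+\lceil\log_2 N\rceil\ge 2^{(N-1)/2}$, whence the bound forces $\Delta\le 2^{-(N-1)/2}$, contradicting MK violation; thus no MK-violating $\rho_N$ is bound entangled, and since an arbitrary $\rho$ can be brought to such a $\rho_N$ by an LOCC depolarization preserving $\Delta$ (Eq.~(\ref{eq:lambda_Delta})), distillability of $\rho_N$ carries back to $\rho$ and no $N$-qubit state violating the inequality is bound entangled. For the ``if'' part ($N\ge6$) I would exhibit an explicit $\rho_N$: take a separating family $S$ of $\lceil\log_2 N\rceil$ nontrivial splits (the bit-slices of a binary labelling of the parties), set $\lambda_j=\Delta/2$ for $j\in S$ and $\lambda_j=0$ otherwise, put $\lambda_0^-=0$, $\lambda_0^+=\Delta$, and choose $\Delta$ with $2^{-(N-1)/2}<\Delta\le 1/(1+\lceil\log_2 N\rceil)$, which is possible precisely because $1+\lceil\log_2 N\rceil<2^{(N-1)/2}$ for $N\ge6$. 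This $\rho_N$ violates the MK inequality and is entangled (a Bell violation already excludes separability), yet the PPT splits in $S$ separate every pair, so no pair is distillable and the state is bound entangled.

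The step I expect to be the main obstacle is the combinatorial core: proving that non-distillability is exactly equivalent to the PPT splits forming a separating system and pinning its minimal size at $\lceil\log_2 N\rceil$, so that the comparison $1+\lceil\log_2 N\rceil$ versus $2^{(N-1)/2}$ crosses sharply between $N=5$ (equality, hence no violation admissible) and $N=6$ (strict, hence the construction exists). A secondary point to verify is that the separating family is realizable by \emph{nontrivial} splits and that the chosen weights keep $\rho_N$ a valid (nonnegative) state.
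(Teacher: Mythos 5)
Your argument is correct in substance, and it takes a genuinely different route from the paper's. The paper proves the ``only if'' direction by invoking Theorem~\ref{Thm:main} (at least $\lfloor 2^{N-1}-2^{(N-1)/2}+1\rfloor$ distillable splits) and then treating $N=3,4,5$ as separate cases: with at most $0$, $1$, or $2$ non-distillable splits, a pigeonhole argument finds a pair of particles all of whose separating splits are NPT, and Proposition~\ref{Prop:DCT} distills an EPR pair. You replace this counting by a sharper structural statement: via Propositions~\ref{Prop:DC} and~\ref{Prop:DCT}, non-distillability of $\rho_N$ is equivalent to the PPT splits forming a separating system on the $N$ parties, whose size is at least $\lceil\log_2 N\rceil$; feeding $2\lambda_j\ge\Delta$ and $\lambda_0^++\lambda_0^-\ge\Delta$ into the normalization then gives $\Delta\le 1/(1+\lceil\log_2 N\rceil)$ for every bound entangled member of the family, which disposes of $N\le 5$ uniformly (with equality exactly at $N=5$) and with no case analysis. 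For the ``if'' direction the paper exhibits the sliding-window family $J_N=\{3,6,\ldots,3\cdot 2^{N-3}\}$ of $N-2$ PPT splits with $\Delta=1/(N-1)$, whereas you use a minimal separating family of $\lceil\log_2 N\rceil$ bit-slice splits. Your route buys more than the paper's: the matching upper and lower bounds show that the largest $\Delta$ attainable by a BE state in the family is exactly $1/(1+\lceil\log_2 N\rceil)$, so your example violates the MK inequality more strongly than the paper's (by a factor of order $(N-1)/\log_2 N$), and the comparison $1+\lceil\log_2 N\rceil$ versus $2^{(N-1)/2}$ explains transparently why the threshold sits at $N=6$. The reduction from a general $\rho$ to $\rho_N$ by the $\Delta$-preserving LOCC depolarization is the same in both treatments.

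One slip to repair in the construction: with $\lambda_0^+=\Delta$, $\lambda_0^-=0$, and $\lambda_j=\Delta/2$ on the $\lceil\log_2 N\rceil$ chosen splits, the trace is $\bigl(1+\lceil\log_2 N\rceil\bigr)\Delta$, so normalization forces $\Delta=1/(1+\lceil\log_2 N\rceil)$ exactly; any $\Delta$ strictly inside your interval $\bigl(2^{-(N-1)/2},\,1/(1+\lceil\log_2 N\rceil)\bigr]$ leaves the state unnormalized. This is harmless---either take the endpoint $\Delta=1/(1+\lceil\log_2 N\rceil)$, which does strictly exceed $2^{-(N-1)/2}$ for all $N\ge 6$, or absorb the slack $s=\bigl(1-(1+\lceil\log_2 N\rceil)\Delta\bigr)/2$ equally into $\lambda_0^+$ and $\lambda_0^-$, which leaves $\Delta$ and all the PPT conditions untouched---but it should be said. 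Your two flagged verifications also go through: assigning party $N$ the all-zero label and parties $1,\ldots,N-1$ the labels $1,\ldots,N-1$ in binary, every bit-slice $b\le\lceil\log_2 N\rceil-1$ is a legitimate split $P_j$ with $j\neq 0$ (the label $2^b\le N-1$ occurs, and the label of party $N$ is $0$), and the slices are pairwise distinct, so the separating family of nontrivial splits exists whenever $N\ge 2$.
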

\begin{proof}
We note that the number of total bipartite splits is $2^{N-1}-1$,
and that the number of all distillable bipartite splits is
at least $\lfloor 2^{N-1} - 2^{(N-1)/2}+1\rfloor$ by Theorem~\ref{Thm:main}.

We first assume that $N\le 5$, that is, $N=3$, $N=4$, or $N=5$.

(Case 1) $N=3$;
It follows from Theorem~\ref{Thm:main} that all bipartite splits are distillable,
and so have NPT.
By Proposition~\ref{Prop:DCT},
a maximally entangled state can be distilled
between any particles $i$ and $j$.

(Case 2) $N=4$;
Since $\lfloor 2^3 - 2^{3/2}+1\rfloor=6$ and $2^3-1=7$,
we obtain that all bipartite splits are distillable or
there is only one non-distillable bipartite split.
Hence, there is at least one pair $i$ and $j$
such that all bipartite splits whose two different parties contain the particles $i$ and $j$ respectively
are distillable.
As in the Case 1, since they have NPT,
a maximally entangled pair can be distilled between the particles $i$ and $j$.

(Case 3) $N=5$;
Since $\lfloor 2^4 - 2^2+1\rfloor=13$ and $2^4-1=15$,
we obtain that all bipartite splits are distillable, or
there exist at most two non-distillable bipartite splits.
Hence, there is at least one pair $i$ and $j$
between which a maximally entangled pair can be distilled
by the same reason as the Case 2.

Conversely, if $N\ge 6$ then
there exists an $N$-qubit BE state violating the MK inequality
as follows:
Take $\lambda_0^+=1/(N-1)$, $\lambda_0^-=0$, and
$\lambda_j=1/2(N-1)$ if $j=3, 6, \ldots, 3\cdot 2^{N-3}$
and $\lambda_j=0$ otherwise.
Under these conditions, the state $\rho_N$ becomes,

\begin{eqnarray}
\varrho_N&=&\frac{1}{N-1}\ket{\Psi_0^+}\bra{\Psi_0^+}
\nonumber \\
&&+\frac{1}{2(N-1)}\sum_{j\in J_N}
\left(\ket{\Psi_j^+}\bra{\Psi_j^+}+\ket{\Psi_j^-}\bra{\Psi_j^-}\right),
\nonumber \\
\label{eq:BErho_N}
\end{eqnarray}
where $J_N=\{3, 6, \ldots, 3\cdot 2^{N-3}\}$.
Then since $N-1 < 2^{(N-1)/2}$ if $N\ge 6$,
the state $\varrho_N$ violates the MK inequality
with respect to the Bell operator~(\ref{eq:B_N}).

\begin{figure}
\includegraphics[width=0.7\linewidth]{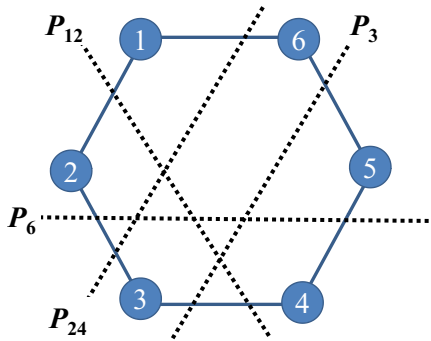}
\caption{\label{Fig:splits}
Undistillable bipartite splits $P_j$ of $\varrho_N$ in (\ref{eq:BErho_N}) when $N=6$.
}
\end{figure}

Furthermore, since $\Delta=2\lambda_j$ if $j\in J_N$,
by Proposition~\ref{Prop:DC},
the state $\varrho_N$ is not distillable
for the bipartite splits $P_j$ for $j\in J_N$.

As seen in FIG.~\ref{Fig:splits},
if two different particles $k$ and $k'$ in the state $\varrho_N$ are given
then $P_{3\cdot 2^{N-1-k}}$ or $P_{3\cdot 2^{N-2-k}}$ is a bipartite split
where the two particles belong to different parties,
and neither $P_{3\cdot 2^{N-1-k}}$ nor $P_{3\cdot 2^{N-2-k}}$
is bipartite distillable,
and hence a maximally entangled state between the particles $k$ and $k'$
cannot be distilled.
Since $k$ and $k'$ are arbitrary,
the state $\varrho_N$ is not distillable,
that is, it is BE
since it is inseparable.
Therefore, there exists an $N$-qubit BE state $\varrho_N$
violating the MK inequality if $N\ge 6$.
\end{proof}

As seen in Theorem~\ref{Thm:main2},
for $3\le N\le 5$,
there exists no $N$-qubit BE state that violates the inequality.
Hence we can say that if $3\le N\le 5$ then
violation of the inequality implies distillability.

In conclusion,
we have studied the explicit relation between violation of Bell inequalities and
bipartite distillability of multi-qubit states,
and have shown that if any $N$-qubit state violates the MK inequality then
there exist at least $\lfloor 2^{N-1} - 2^{(N-1)/2}+1\rfloor$ distillable bipartite splits.
Hence, the probability that a randomly chosen bipartite split is distillable
approaches one exponentially with $N$ as $N$ tends to infinity.
We have also shown that an $N$-qubit BE state violates the inequality
if and only if $N\ge 6$.

It has been shown that
while $N$-qubit states in a class of BE states presented in \cite{Dur,Acin}
violate the MK inequality for $N\ge 8$,
the states in the class
violate different forms of Bell inequalities for $N\ge 7$ in Ref.~\cite{KKCO}
and for $N\ge 6$ in Ref.~\cite{SSZ}.
Furthermore, it has been also shown that
there exists a four-qubit BE state
which can maximally violate a certain form of Bell inequality~\cite{AH}.
Therefore, our results could be also improved
by using Bell inequalities different from the MK inequality,
and could be furthermore generalized to multipartite distillability.

S.L. was supported by the Korea Research Foundation Grant funded by the Korean Government
(MOEHRD, Basic Research Promotion Fund) (KRF-2007-331-C00049),
and J.K. was partially supported by the IT R\&D program of MKE/IITA
(2005-Y-001-04, Development of Next Generation Security Technology).


\end{document}